\newcommand{\rmm}[1]{\left< \mathrm{#1} \right>}
\newcommand{\R}{\mathbb{R}}
\newtheorem{rem}{Remark}
\newtheorem{theorem}{Theorem}
\newtheorem{prop}{Proposition}
\newtheorem{corollary}{Corollary}
\newtheorem{definition}{Definition}[section]
\newtheorem{prob}{Problem}[section]
\DeclareMathOperator*{\argmin}{arg\,min}
\begin{document}

\title{Formal Synthesis of Analytic Controllers for Sampled-Data Systems via Genetic Programming.} 

\author{Cees F. Verdier and Manuel Mazo Jr.} 
\date{\vspace{-5ex}}

\affil{Delft Center for Systems and Control, Delft University of Technology, The Netherlands (e-mail: c.f.verdier@tudelft.nl, m.mazo@tudelft.nl)}
\maketitle

\begin{center}
\thanks{Supported by NWO Domain TTW under the CADUSY project \#13852. }

\thanks{The original version of this article has been accepted to CDC 2018. This version contains minor corrections. }
\end{center}

\begin{abstract}

This paper presents an automatic formal controller synthesis method for nonlinear sampled-data systems with safety and reachability specifications. Fundamentally, the presented method is not restricted to polynomial systems and controllers. We consider periodically switched controllers based on a Control Lyapunov Barrier-like functions. The proposed method utilizes genetic programming to synthesize these functions as well as the controller modes. Correctness of the controller are subsequently verified by means of a Satisfiability Modulo Theories solver. Effectiveness of the proposed methodology is demonstrated on multiple systems.
\end{abstract}

\section{Introduction}
Modern controller design for nonlinear continuous systems often involves both reachability and safety specifications. Furthermore, digital controller implementations typically impose that states are measured periodically and that control signals are held constant in between sampling. This paper proposes an approach to automatically synthesize periodically switched state feedback controllers for a special subclass of safety and reachability specifications for nonlinear sampled-data systems. 

Two popular paradigms for automatic controller synthesis for reachability and safety specifications are: 1) abstraction and simulation, and 2) Control Lyapunov functions (CLF) and Control Barrier Functions (CBF). 

The first approach abstracts the infinite system to a finite one, which simplifies the formal controller synthesis for temporal logic specifications \cite{Tabuada2009}. For nonlinear systems, tools implementing this approach include PESSOA \cite{Mazo2010}, SCOTS \cite{Rungger2016} and CoSyMa \cite{Mouelhi2013}. The second approach deals with the system as an infinite system. Control Lyapunov functions \cite{Arstein1983} and Control Barrier Functions \cite{Wieland2007} are design tools for stabilization and safety specifications respectively. In \cite{Romdlony2016} and \cite{Ames2016} attempts are made to combine both CLFs and CBFs.
Automatic synthesis of these functions is often done by posing the problem as a sum of squares (SOS) problem, which can be solved through convex optimization, see e.g. \cite{Papachristodoulou2002} and \cite{Tan2004}. Drawbacks of the abstraction and (bi-)simulation approach are that it requires discretization of the state space and that the resulting controller is often an enormous look-up table in the form of a sparse matrix or a binary decision diagram (BDD). On the other hand, the SOS programming paradigm is limited to  polynomial systems. Although reformulation of some nonpolynomial systems to an SOS formulation exists, e.g. \cite{Papachristodoulou2002,Hancock2013} and references therein, polynomial Lyapunov functions can be too restrictive, as global asymptotical stability of a polynomial system does not imply the existence of a polynomial Lyapunov function \cite{Ahmadi2011}.

To overcome these limitations, we propose a framework which uses genetic programming (GP) in combination with a Satisfiability Modulo Theories (SMT) solver. GP is an evolutionary algorithm which evolves encoded representations of symbolic functions \cite{Koza1992}, rather than just fitting optimal parameters given a predefined structure. An SMT solver is a tool which uses a combination of background theories to determine whether a first-order logic formula can be satisfied \cite{Barrett2009}. Our approach uses a CLF-like function and a predefined switching law that infers a reachability and safety specification. The proposed framework uses GP to automatically generate both candidate CLFs and optionally the controller modes of a periodically switched state feedback controller. The SMT solver is subsequently used to formally verify the candidate solutions. By using GP, we allow ourselves to search for solutions that include nonpolynomial functions. Furthermore, the synthesized controllers are expressed as analytic expressions that are significantly more compact than BDDs returned by abstraction-based methods.

This work is a follow-up to \cite{Verdier2017}, in which also a combination of GP and SMT solvers is used. The main contributions of this paper are: 1) synthesis w.r.t. a predefined periodic sampling time, rather than arbitrary switching with a (more conservative) minimum dwell-time and 2) the use of a different and less conservative CLBF. Additionally, more benchmark examples are provided. Other related work is found in \cite{ravanbakhsh2016}, in which robust CLFs for switched systems with reach-while-stay (RWS) specifications are synthesized using a counterexample-guided synthesis. However, in \cite{ravanbakhsh2016} the controller modes are pre-specified, while in our approach these modes can also be discovered automatically, eliminating the need for prior input space discretization. Furthermore, this paper  extends the set of specifications to include invariance of the goal set. Finally, similar to \cite{Verdier2017}, the theoretical lower bounds on the minimum dwell-times reported in \cite{ravanbakhsh2016} are often very conservative.

\paragraph{Notation}
Let $\mathbb{Z}_{\geq 0} = \left\{ 0,1,2 \dots \right\}$.  Let us denote the boundary and the interior  of a set $D$ with $\partial D$ and $int(D)$ respectively. The image and inverse image of set $A$ under $f$ are denoted by $f[A]$ and $f^{-1}[A]$. Finally, the Euclidean norm is denoted by $\| \cdot \|$.

\section{Problem definition}

In this paper we design sampled-data state feedback controllers for nonlinear continuous-time systems described by
\begin{equation}
\label{eq:system}
\dot{\xi}(t) = f(\xi(t),u(t)),
\end{equation}
where the variables $\xi(t) \in \mathcal{X} \subseteq \R^n$ and $u(t) \in \mathcal{U} \subseteq \R^m$ denote the state and input respectively. Due to the sampled-data nature of the controller, $u(t) = g(x(t_k))$, $\forall t \in [t_k, t_k+h)$, where $h>0 $ denotes a constant sampling time.

\subsection{Control specification}
Given a compact safe set $S \subseteq  \mathcal{X}$, compact initial set $I \subset S$ and compact goal set $G  \subset S$, we consider the following specifications:
\begin{enumerate}
\item[$\mathtt{CS}_1$] \textit{Reach while stay (RWS)}: all trajectories starting in $I$ eventually reach  $G$, while staying within $S$:
\end{enumerate}
\begin{equation}
\begin{array}{r}
\forall \xi(t_0) \in I,\exists T,\forall t \in [t_0, T]:
 \xi(t) \in S  \wedge \xi(T) \in G.
\end{array}
\label{eq:RWS}
\end{equation}
\begin{enumerate}
\item[$\mathtt{CS}_2$] \textit{Reach and stay while stay (RSWS)}: all trajectories starting from $I$ eventually reach and stay in $G$, while staying within $S$:
\end{enumerate}
\begin{equation}
\begin{array}{r}
\!\! \! \forall \xi(t_0) \in I,\exists T ,\forall t \geq t_0,\forall \tau \geq T \! :  \! 
 \xi(t) \in S \wedge \xi(\tau) \in \! G. \! \! \!
 \end{array}
 \label{eq:RSWS}
\end{equation}

This paper addresses the following problem:
\begin{prob}
\label{prob:1}
Given the compact sets $(S,I,G)$ and system \eqref{eq:system}, synthesize a sampled-data state feedback controller $u(t) = g(x(t_k))$ such that the closed-loop system satisfies specification $\mathtt{CS}_1$ or $\mathtt{CS}_2$.
\end{prob}

We propose to solve Problem \ref{prob:1} by using a periodically switched controller based on a CLBF, as will be established in the next section. The CLBFs and controller modes are synthesized using grammar-guided genetic programming (introduced in Section \ref{sec:GP}) and verified by means of an SMT solver. The overall algorithm is described in Section \ref{sec:Alg}.

\section{Control strategy}
In this section we discuss the used control strategy and establish how it solves problem \ref{prob:1} by means of Theorem \ref{thm:1} and Corollary \ref{col:rsws}.

\subsection{Control Lyapunov Barrier Function}

Consider a set of controller modes with index set $Q \! \subset \! \mathbb{Z}_{\geq 0}$:
\begin{equation}
\mathcal{G} = \{g_q : \mathcal{X} \rightarrow \mathcal{U}  \mid q \in Q \}.
\label{eq:cmodes}
\end{equation}
Given the system \eqref{eq:system}, an initial state $x=\xi(t_k)$, let us denote the (over-approximated) reachable set for $t\in [t_k, t_k+h]$ under a controller mode $q$ as $R_q(x)$  s.t. given a $q$, $\forall t \in [t_k, t_k+h]: \xi(t) \in R_q(\xi(t_k))$. The construction of $R_q$ is discussed in Section \ref{sec:reach}. We consider a switching controller based on a CLBF defined as follows.

\begin{definition}[Control Lyapunov Barrier Function] A function $V \in \mathcal{C}^1(S, \R)$ is a Control Lyapunov Barrier Function (CLBF) w.r.t. the compact sets $(S,I,G)$, $S \subseteq  \mathcal{X}$, $I,G \subseteq int(S)$, system \eqref{eq:system}, and controller modes \eqref{eq:cmodes} if there exists a scalar $\gamma > 0$ such that
\begin{subequations}
\begin{align}
\forall x \in I : &~V(x) \leq 0 \label{eq:conI} \\
\forall x \in \partial S : & ~V(x) >  0  \label{eq:condS} \\
\forall x \in A \backslash G , \exists q \in Q, \forall z \in R_q(x):  &~ \dot{V}_q(x,z) \leq - \gamma \label{eq:conVd} 
\end{align}\label{eq:conV}
\end{subequations}
where $A:= \{ x\in S\mid  V(x) \leq 0\}$ and $ \dot{V}_q(x,z)= \langle \nabla V (z), f(z,g_q(x)) \rangle$.
\end{definition}

\begin{rem}
\label{rem:1}
The choice of $\gamma$ is arbitrary, because if a solution $V^*$ exists for $\gamma^*$, there always exists a linear transformation of $V^*$ such that the inequalities in \eqref{eq:conV} are satisfied for any $\gamma$.
\end{rem}

\begin{prop}
\label{prop:1}
Given a CLBF $V$, $\exists e \in \R$ s.t. $e = \inf_{x \in S \backslash G} V(x)$. Furthermore, the sublevel set $L_c:=\{x\in S \mid V(x) \leq c \}$ is compact. 
\end{prop}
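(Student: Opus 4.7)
The plan is to derive both claims purely from topological facts---that $V$ is continuous (since $V\in\mathcal{C}^1(S,\R)$) and that $S$ is compact---so neither assertion needs the defining inequalities \eqref{eq:conV} of a CLBF; only the regularity of $V$ and the hypothesis $G\subseteq int(S)$ are used.

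For the existence of $e$, I would invoke the extreme value theorem: because $V$ is continuous on the compact set $S$, the image $V[S]$ is a compact subset of $\R$ and is in particular bounded below. Hence the restriction of $V$ to any subset of $S$ is also bounded below, and $\inf_{x\in S\backslash G}V(x)$ is a finite real number. To ensure the infimum is not vacuous, I would note that $S\backslash G$ is nonempty: the assumption $G\subseteq int(S)$ forces $\partial S\subseteq S\backslash G$, and $\partial S$ is nonempty because $S$ is a nonempty compact (hence bounded and proper) subset of $\R^n$.

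For the compactness of $L_c$, I would write $L_c=V^{-1}[(-\infty,c]]$. Continuity of $V$ makes this a closed subset of $S$ (preimage of a closed half-line under a continuous map), and a closed subset of a compact set is compact. Equivalently, $L_c$ inherits boundedness from $S$ and is closed in $\R^n$ since $S$ itself is closed, so Heine--Borel applies directly. I do not anticipate any real obstacle: the proposition is a routine consequence of continuity and compactness, and is stated separately only to justify the well-definedness of the constants and sublevel sets that will appear in the upcoming theorem.
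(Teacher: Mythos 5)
Your proof is correct and follows essentially the same route as the paper: both arguments rest only on the continuity of $V$ and the compactness of $S$, deducing boundedness of $V[S\backslash G]$ from compactness of $V[S]$ and compactness of $L_c$ from it being a closed subset of the compact set $S$. Your phrasing of the second part (preimage of a closed half-line is closed in $S$, hence compact) is in fact slightly cleaner than the paper's appeal to the inverse image of the compact set $\{y\in V[S]\mid y\leq c\}$, but it is the same underlying idea.
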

\begin{proof}
Since $V(x)$ is continuous and $S$ is compact,  $V[S] \subset \R$ is compact and hence $V[S \backslash G] \subseteq V[S]$ is bounded, i.e.  $\exists e \in \R$ s.t. $e = \inf_{x \in S \backslash G} V(x)$. Moreover, $Y:= \{y \in V[S]~|~ y \leq c\}$ and its inverse image $V^{-1}[Y] = L_c$ are compact.
\end{proof}

\subsection{Control policy}

Given a CLBF $V$, we consider periodically switching controllers of the form
\begin{equation}
\left\{ \begin{array}{rl}
u(t) &= g_{q_k}(\xi(t_k)). \\
q_k (t_k)  &= \argmin \limits_{q\in Q} \max \limits_{z \in R_q(\xi(t_k)) }\dot{V}_q(\xi(t_k),z)
\end{array}\right.
\label{eq:controller}
\end{equation}
where $t_{k+1} = t_k +h$, $t \in [t_k, t_k+h)$.

\subsection{Reach while stay}

The presented controller strategy based on the CLBF enforces specification $\mathtt{CS}_1$, as shown in the following theorem.
\begin{theorem}
\label{thm:1}
Given a system \eqref{eq:system}, CLBF $V$ w.r.t. compact sets $(S,I,G)$ and controller \eqref{eq:controller}, then \eqref{eq:RWS} holds.
\end{theorem}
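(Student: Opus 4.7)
The plan is to use $V$ as a sampled-data Lyapunov-like function: under the controller \eqref{eq:controller}, $V$ strictly decreases along every trajectory, which (i) keeps the trajectory inside the sublevel set $A = \{x \in S : V(x) \leq 0\}$, and thus inside $int(S)$ (since \eqref{eq:condS} forces $V > 0$ on $\partial S$), and (ii) drives the trajectory into $G$ within finitely many sampling intervals because by Proposition \ref{prop:1}, $V$ is bounded below on $S \setminus G$.

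First I would analyse a single sampling interval. Fix $k$ and suppose $\xi(t_k) \in A \setminus G$. The CLBF condition \eqref{eq:conVd} guarantees the existence of at least one $q$ with $\max_{z \in R_q(\xi(t_k))} \dot{V}_q(\xi(t_k),z) \leq -\gamma$. Since the controller selects the $\argmin$ of exactly this max, the chosen mode $q_k$ must also satisfy $\dot{V}_{q_k}(\xi(t_k),z) \leq -\gamma$ for every $z \in R_{q_k}(\xi(t_k))$. Because $\xi(t) \in R_{q_k}(\xi(t_k))$ for all $t \in [t_k, t_k+h]$, the chain rule gives $\tfrac{d}{dt} V(\xi(t)) = \dot{V}_{q_k}(\xi(t_k),\xi(t)) \leq -\gamma$; integrating yields both the pointwise bound $V(\xi(t)) \leq V(\xi(t_k))$ throughout the interval and the step bound $V(\xi(t_{k+1})) \leq V(\xi(t_k)) - \gamma h$.

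Next I would chain these intervals. Starting from $\xi(t_0) \in I$, condition \eqref{eq:conI} gives $\xi(t_0) \in A$. Inductively, as long as no sample lies in $G$, the single-interval analysis yields $\xi(t_k) \in A$ and $V(\xi(t_k)) \leq V(\xi(t_0)) - k\gamma h$. By Proposition \ref{prop:1} there is a lower bound $e$ on $V$ over $S \setminus G$, so this linear decrease cannot continue beyond $k \geq (V(\xi(t_0)) - e)/(\gamma h)$. Therefore either some sampling instant or some intermediate time $T$ gives $\xi(T) \in G$, and throughout $[t_0,T]$ we have $V(\xi(t)) \leq 0$, so $\xi(t) \in A \subseteq int(S) \subseteq S$, establishing \eqref{eq:RWS}.

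The delicate point is the inter-sample behaviour: one must justify that the trajectory cannot escape $S$ between sampling instants before the argument above is in force. Here the standard continuity/contradiction argument suffices: if $\xi$ were to first touch $\partial S$ at some $t^\star \in (t_k, t_{k+1})$, then by \eqref{eq:condS} we would have $V(\xi(t^\star)) > 0$, contradicting the monotone bound $V(\xi(t)) \leq V(\xi(t_k)) \leq 0$ obtained from the reachable-set property (which implicitly requires $R_{q_k}(\xi(t_k)) \subseteq S$ for $\dot{V}_{q_k}$ to be well-defined there, as is assumed in the CLBF definition). A secondary subtlety is that the trajectory may enter $G$ partway through an interval, in which case one simply takes $T$ to be that first entry time rather than the first sample lying in $G$.
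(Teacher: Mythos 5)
Your proof is correct and takes essentially the same route as the paper's: both use the CLBF decrease condition \eqref{eq:conVd} together with the $\argmin$ selection in \eqref{eq:controller} to get a per-interval decrease of $\gamma h$, invoke the lower bound from Proposition \ref{prop:1} to force entry into $G$ in finite time, and rule out escape through $\partial S$ via \eqref{eq:condS}. Your version is somewhat more explicit about the induction over sampling intervals and the inter-sample continuity argument, but the underlying argument is the same.
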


\begin{proof}
For $\xi(t_0) \in I$ it follows from \eqref{eq:conI} and the definition of $A$ that $V(\xi(t_0)) \in A$. From \eqref{eq:conVd} it follows that for all $\xi(t_k) \in A \backslash G$ there exists a $q \in Q$ such that $\forall t \in [t_k, t_k+h]: \dot{V}_q (\xi(t_k),\xi(t)) \leq -\gamma$. Selecting such a mode using controller \eqref{eq:controller}, applying the comparison theorem (see e.g. 
\cite{Khalil2002}), and using $\forall x \in A$, $V(x) \leq 0$, it follows that $\forall k \in \mathbb{Z}_{\geq 0}$, $\forall t \in [t_k, t_k+h]$, $\forall \xi(t_k) \in A \backslash G$:
$
V(\xi(t))  
 \leq V(\xi(t_k))- \gamma h \leq -\gamma h
$
Therefore, $\xi(t_k) \in A \backslash G$ implies $\forall t \in [t_k, t_k+h]$, $V(\xi(t))$ will decrease and thus cannot reach $\partial S$, as from \eqref{eq:condS} we have $\forall x \in \partial S:  V(x) > 0$. Since from proposition \ref{prop:1} it follows that $V(x)$ on $A \backslash G \subseteq S \backslash G $ is lower bounded, $V(\xi(t))$ will decrease until in finite time $\xi(t)$ leaves $A \backslash G$ and can only enter $G$, therefore \eqref{eq:RWS} holds.
\end{proof}

\subsection{Reach and stay while stay}

The conditions in \eqref{eq:conV} are not sufficient for forward invariance of (a subset of) the goal set, as they do not impose that $\forall x \in \partial G, V(x) < \inf_{y\in S \backslash G} V(y)$. Therefore some trajectories starting in $\partial G$ might enter $S\backslash G$ before entering $G$ again. The following corollary establishes sufficient conditions for specification $\mathtt{CS}_2$. 
\begin{corollary}
\label{col:rsws}
Given a system \eqref{eq:system}, CLBF $V$ w.r.t. compact sets $(S,I,G)$, and a controller \eqref{eq:controller}, if $\exists \beta \in \R$ such that
\begin{subequations}
\begin{align}
\forall x \in \partial G:~  V(x) > \beta \label{eq:condG} \\
 \! \!  \forall x \in G \backslash int(B), \exists q \in \! Q, \forall z \in \! R_q(x): 
\dot{V}_q(x,z) \leq - \gamma & \label{eq:conVd2}
\end{align}
 \label{eq:addV}
\end{subequations}
where $B:= \{x \in S \mid V(x) \leq \beta \}$, then \eqref{eq:RSWS} holds.
\end{corollary}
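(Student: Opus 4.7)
The plan is to reuse Theorem~\ref{thm:1} for the reach-and-safety part and then establish eventual confinement to $G$ via the sublevel set $B$. Since the hypotheses of Theorem~\ref{thm:1} are inherited, every trajectory $\xi$ starting in $I$ enters $G$ at some time $T_0$ while staying in $S$ throughout $[t_0,T_0]$, and moreover $V(\xi(t)) \leq 0$ on that interval. What remains is to produce a time $T \geq T_0$ after which $\xi(\tau)\in G$ for all $\tau \geq T$; I will take $T$ to be the first sampling instant at which $\xi(t_k) \in B$.

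The key step is to run the comparison-theorem argument of Theorem~\ref{thm:1} again, now with \eqref{eq:conVd2} in place of \eqref{eq:conVd}. Whenever $\xi(t_k) \in G \backslash int(B)$, the switching rule \eqref{eq:controller} picks a mode $q$ with $\dot V_q \leq -\gamma$ on the entire reachable set $R_q(\xi(t_k))$, so $V(\xi(t)) \leq V(\xi(t_k)) - \gamma (t-t_k)$ on $[t_k,t_k+h]$. Since Proposition~\ref{prop:1} provides a uniform lower bound $e$ for $V$ on $S \backslash G$, iterating this decrease drives $V$ below $\beta$ in finitely many samples, so $\xi$ enters $B$ in finite time.

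To conclude, I would use \eqref{eq:condG} to show that $B$ is effectively contained in $G$ and is forward invariant. Condition \eqref{eq:condG} gives $V>\beta$ on $\partial G$, so the closed set $B$ is disjoint from $\partial G$; combined with $G \subseteq int(S)$ and the fact that the trajectory reaches $B$ through $G$ with $V$ continuous, the connected component of $B$ traversed by $\xi$ must lie inside $int(G)$. Forward invariance of $B$ then follows by contradiction: if $\xi$ ever exited $B$ after entering it, continuity of $V$ would force the trajectory to cross $\{V=\beta\}$ at some subsequent sample located in $G \backslash int(B)$, but \eqref{eq:conVd2} forces $V$ to strictly decrease on the corresponding $R_q$, precluding the crossing.

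The main obstacle I anticipate is the sampled-data subtlety: between samples the trajectory is not directly controlled, so ruling out transient exits from $G$ or from $B$ requires the over-approximation $R_q$ in \eqref{eq:conVd2} to do real work in conjunction with continuity of $V$ and \eqref{eq:condG}. Once this topological care is taken, \eqref{eq:RSWS} follows with the chosen $T$.
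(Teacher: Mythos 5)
Your proposal follows essentially the same route as the paper's proof: invoke Theorem~\ref{thm:1} to reach $G$, rerun the comparison-theorem descent with \eqref{eq:conVd2} in place of \eqref{eq:conVd} to enter the sublevel set $B$ in finite time, and then use \eqref{eq:condG} to separate $B$ from $\partial G$ and conclude forward invariance of the part of $B$ lying in $G$. The only structural difference is that you argue on the connected component of $B$ containing the entry point (correctly placed inside $int(G)$ because $B\cap\partial G=\emptyset$), where the paper simply works with $G\cap B$; these are interchangeable. Two minor inaccuracies: the bound you need is lower-boundedness of $V$ on $G\backslash int(B)$ rather than on $S\backslash G$ (both follow from compactness of $S$ and continuity of $V$, so this is harmless), and the descent phase can in principle exit $G$ into $A\backslash G$, where \eqref{eq:conVd} must take over before \eqref{eq:conVd2} resumes --- worth a sentence.

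The one substantive weak point is exactly the one you flag: in the invariance step you claim that an exit from $B$ would force the trajectory to cross $\{V=\beta\}$ ``at some subsequent sample located in $G\backslash int(B)$.'' Continuity of $V$ only forces a crossing at some \emph{time}, not at a sampling instant, and at a sample $x\in int(B)\cap G$ neither \eqref{eq:conVd} nor \eqref{eq:conVd2} constrains $\dot V_q$ over $R_q(x)$, so nothing in the hypotheses rules out an intersample excursion out of $B$ (or out of $G$) launched from such a sample. You correctly identify this as the main obstacle but do not close it. In fairness, the paper's own proof rests on the same point --- it asserts invariance from the behaviour of states on $\partial(G\cap B)$, where \eqref{eq:conVd2} does apply, without addressing samples in the interior --- so you have reproduced the argument at the same level of rigor as the authors. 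Closing the gap would require something extra, e.g.\ imposing \eqref{eq:conVd2} on all of $G$, or showing that the mode selected by \eqref{eq:controller} at samples $x\in int(B)\cap G$ satisfies $R_q(x)\subseteq G\cap B$.
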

\begin{proof}
 From Theorem \ref{thm:1} we have that there exists a time $t_K \geq t_0$ such that $\xi(t_K) \in G$.
Analogous to the proof of Theorem \ref{thm:1}, from \eqref{eq:conVd2} it follows that $\forall \xi(t_K) \in G$, $\xi(t)$ with $t\geq t_K$ enters in finite time $G \cap B$. From the definition of $B$  and Proposition \ref{prop:1} it follows that $B$ is compact and thus $G \cap B$ is compact. From  \eqref{eq:conVd2} and controller \eqref{eq:controller} we have that $\forall x \in \partial (G \cap B), z \in R_q(x): \dot{V}_q(x,z) \leq  - \gamma$. Combining this with \eqref{eq:condG}, we have that all states $\xi(t) \in \partial (G \cap B)$ cannot reach $\partial G$ and $V(\xi(t))$ decreases, thus these trajectories will remain within $G \cap B$. Therefore it follows that $G \cap B \subseteq G$ is forward invariant.  As $G \subseteq int(S)$, we have that \eqref{eq:RSWS} holds. 
\end{proof}

\begin{rem}
Comparing the CLBF for RSWS to the CLBF in \cite{Verdier2017}, in this work the condition on the derivative of $V$ is only imposed for the sublevel set $A \subset S$, rather than the entire safe set $S$. Secondly, the CLBF in this work involves only 2 parameters $y$ and $\beta$, as opposed to 5 in  \cite{Verdier2017}.
\end{rem}

\section{Grammar-guided genetic programming}
\label{sec:GP}
Genetic programming (GP) is an evolutionary algorithm capable of synthesizing entire functions, in our case a CLBF and controller modes, that minimize a cost function, without pre-specifying a fixed structure \cite{Koza1992}. The algorithm is initialized with a random population of candidate solutions (individuals). Each individual is assigned a fitness using the fitness function, which reflects how well the design goal is satisfied. Individuals are then selected based on this fitness to undergo genetic operations. The resulting individuals form a new generation. This cycle is repeated for many generations under the expectation that the average fitness increases, until a solution is found or a maximum number of generations is met. 

In GP, solutions (the phenotypes) are encoded in a certain representation (the genotypes) that allows for easy modification. In this work we use a grammar-guided genetic programming (GGGP) algorithm, similar to the work of \cite{Whigham1995}, which uses a tree representation that is constructed based on a Backus-Naur form (BNF) grammar \cite{Backus1963}. The BNF grammar consists of the tuple $\{\mathcal{N},\mathcal{S},\mathcal{P},\mathcal{P}^* \}$, where $\mathcal{N}$ denotes the set of nonterminals, $\mathcal{S} \in \mathcal{N}$ the start symbol, $\mathcal{P}$ the set of production rules, and $\mathcal{P}^*$ the set of terminal production rules, which contains no recursive rules. An example of a simple grammar to construct monomials is given by $\mathcal{N} = \{ \mathrm{mon,var} \},~\mathcal{S}= \left< \mathrm{mon} \right>$, $\mathcal{P}$ in Table \ref{tab:PR1}, and $\mathcal{P}^*$ obtained by omitting the recursive rules from $\mathcal{P}$. Here $\rmm{mon}$ denotes monomials and $\rmm{var}$ scalar variables. Using $\mathcal{P}$, $\rmm{mon}$ can be mapped to either $\rmm{var}$ or $\rmm{var} \times \rmm{mon}$.

A parse tree is constructed using the BNF grammar as follows. Starting with the start symbol nonterminal $\mathcal{S}$, a random corresponding rule is chosen from the production rules $\mathcal{P}$. This rule forms a subtree that is put under the nonterminal. Subsequently, all nonterminals in the leaves of the resulting tree are similarly expanded, until all leave nodes contain no nonterminals anymore. To limit the tree depth, $\mathcal{P}^*$ is used if a predefined depth is reached, such that the number of recursive rules is limited. The final parse tree is transformed into the phenotype by replacing all nonterminals with their underlying subtrees, yielding a new parse tree corresponding directly to a function. Figure \ref{fig:GGGP} shows a fully grown genotype synthesized using the example grammar, as well as the transformation to its phenotype.

\begin{figure}%
\centering%
\begin{minipage}[t]{.5\textwidth}
\centering
\captionof{table}{Production rules $\mathcal{P}$}
\label{tab:PR1}
    \begin{tabular}{rl}
   $\mathcal{N}$&  Rules \\ \hline
$\left< \mathrm{mon} \right> $& $::= \rmm{var}$ \\
 & $~~~~~ |~\rmm{var} \times \rmm{mon} $  \\
 $\left< \mathrm{var} \right> $&  $::=a ~|~ b$ \\ \hline
    \end{tabular}%
\end{minipage}~~~~%
\begin{minipage}[t]{.5\textwidth}%
\centering%
\vspace{0pt}%
\includegraphics[scale=1]{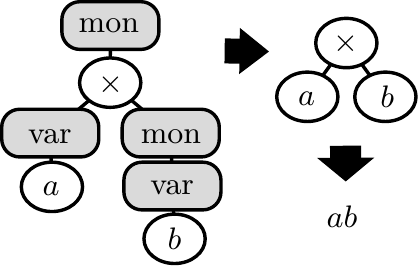}
\captionof{figure}{Genotype to phenotype}%
\label{fig:GGGP}%
\end{minipage}%
\end{figure}%

We use the genetic operators crossover and mutation, which take the role of exploitation and exploration of genotypes respectively. The crossover operator takes two individuals and switches two random subtrees with the same nonterminal root. The mutation operator takes a single individual and replaces a subtree corresponding to a random nonterminal with a new subtree grown from that nonterminal.

As stated before, we aim to synthesize the pair $(V,\mathcal{G})$. For both the CLBF and controller modes we use a separate parse tree, which we refer to as a gene. In this case, the genotype is formed by two genes.

\section{Automatic CLBF and controller synthesis}
\label{sec:Alg}
In this section the overall algorithm is described.

\subsection{One-step ahead reachable set}
\label{sec:reach}
In this work the reachable set is constructed by using Euler's forward method and bounding the local truncation error (LTE). This yields the following analytic expression
\begin{equation*}
r_q(x,\tau,e) = x+\tau f(x,g_q(x))+\frac{1}{2} \tau^2 e,
\end{equation*}
such that the over-approximated reachable set is given by 
\begin{equation}
R_q(s) =\bigcup_{(\tau,e) \in E} r_q(s,\tau,e)
\end{equation}
with $E := [0,h] \times \Pi_{i=1}^n[-\varepsilon_i,\varepsilon_i]$ and
\begin{equation}
\varepsilon_i = \max_{(x,u) \in \mathcal{X} \times \mathcal{U}} \left| \frac{\partial f_i(x,u)}{\partial x}f_i(x,u) \right|.
\end{equation}
While this construction can be quite conservative, it allows for relatively simple analytic expressions.

\subsection{Fitness}
For each inequality \eqref{eq:conI}-\eqref{eq:conVd} (and optionally \eqref{eq:condG}-\eqref{eq:conVd2}), an independent fitness value is constructed, consisting of a sample-based and an SMT solver-based fitness value. The former gives a measure of how much the inequalities in \eqref{eq:CLF1reform} are violated, whereas the SMT solver is used to provide a formal guarantee on whether these inequalities are satisfied. In this work we use the SMT solver dReal \cite{Gao2013}, which is able to verify nonlinear inequalities over the reals. Furthermore, in case a formula is not satisfied, the SMT solver can be used to provide a counterexample, which can again be used for the sample-based verification.

Inequalities \eqref{eq:conI}-\eqref{eq:conVd} and \eqref{eq:condG}-\eqref{eq:conVd2} can be rewritten as 
\footnote{By taking
$C_1 = I$, $C_2 = \partial S$, $C_3 = S \backslash G \times \Pi_{q \in Q} E$, $C_4 = \partial G$, $C_5 = G \times \Pi_{q \in Q} E$, 
$\phi_1(s) = - V(s)$, $\phi_2(s) =  V(s) -c$, $\phi_3(s) = \chi_A(s_x) (-\gamma -\min_{q\in Q}( \dot{V}(s_x,r_q(s_x,s_{\tau,q},s_{e,q}))$, $\phi_4(s) = V(s) -\beta -c$, $\phi_5(s) =  \chi_B(s_x)(-\gamma -\min_{q\in Q}( \dot{V}(s_x,r_q(s_x,s_{\tau,q},s_{e,q})))$, where for inequalities 3 and 5 the state $s$ is partitioned as $s = (s_x, s_{\tau,1}, s_{e,1}, \dots, s_{\tau,\bar{q}}, s_{e,\bar{q}})$ with $s_x \in \R^n$, $s_{\tau,q} \in \R$, $s_{e,q} \in \R^n$, $\bar{q} = |Q|$, $c$ is an arbitrary small real number to make the strict inequality non-strict, and $\chi_D(s)$ denotes a membership function of set $D$, i.e. $\chi_D(s) = 1$ if $s\in D$ and zero otherwise.}
\begin{equation}
\label{eq:CLF1reform}
(\forall s \in C_i) \phi_i(s) \geq 0, ~~~i = 1,\dots, 5.
\end{equation}
Given a finite set $C_{i,\mathrm{samp}} = \{x_1, \dots, x_n\}$, with $C_{i,\mathrm{samp}} \subset C_i$, the sample-based fitness is based on an error measure w.r.t. $\phi_i$ defined as
\begin{align*}%
e_{\phi_i}&:= \left\|  \left[ \min(0, \phi_i(x_1)), \dots, \min(0,\phi_i (x_n) \right] \right \|. 
\label{eq:errormeasures}
\end{align*}%
Using this measure, the sampled-based fitness is given by
\begin{equation}
f_{\mathrm{samp},\phi_i}: = ({1+e_{\phi_i}})^{-1}, ~~~i = 1,...,5.
\end{equation}

The SMT-based fitness $f_{\mathrm{SMT},\phi_i}$ is $1$ if it follows from dReal that the inequality is satisfied and $0$ otherwise. This fitness value is only computed if an individual satisfies $f_{\mathrm{samp},\phi_i} =1$ for all $i \in \{1,2,3\}$ (or $i \in \{1,...,5\}$). Otherwise, $f_{\mathrm{SMT},\phi_i}$ are set to 0 for all conditions.

To prioritize finding a $V$ that satisfies \eqref{eq:conI} and \eqref{eq:condS} before checking the condition on its derivative $\dot{V}$ in \eqref{eq:conVd}, (and similarly for the additional conditions \eqref{eq:condG}-\eqref{eq:conVd2}), we use the weights $w_i = \lfloor w_{i-1} f_{\mathrm{samp},\phi_{i-1}} \rfloor$ 
for $i = 2, \dots 5$ and $w_1 = 1$. The overall fitness is given by
\begin{equation}
f := \sum_{i=1}^j w_i f_{\mathrm{samp},\phi_i} + \sum_{i=1}^j  f_{\mathrm{SMT},\phi_i}, ~ j =3 \vee j=5. 
\label{eq:ffull}
\end{equation}

Finally, to promote the selection of equivalent, but less complex individuals, candidates with the same fitness \eqref{eq:ffull} are ranked according to the number of their parameters. If this is still not decisive, they are subsequently ranked based on their lowest maximum parameter. 

\subsection{Numerical optimization}
In order to speed up the convergence of the fitness, each generation the parameters of the individuals are optimized using Covariance Matrix Adaptation Evolution Strategy (CMA-ES) \cite{Hansen2001}, which is an evolutionary optimization algorithm that is regarded to be robust with regard to discontinuous fitness functions. We use the variant sep-CMA-ES \cite{Ros2008}, because of its linear time and space complexity. In our grammar, we have the rule `$\text{const}$' which creates a random constant. In every generation, these constants are then optimized using sep-CMA-ES, where their initial values are the current parameter values.

\subsection{Algorithm outline}

Provided a system, specification sets $(S,I,G)$, a grammar, and sample sets $C_{i,\mathrm{samp}}$, the proposed approach consists of the following steps:
\begin{enumerate}
\item A random population of individuals is generated as described in Section \ref{sec:GP}.
\item The parameters of the individuals are optimized using sep-CMA-ES based on the sample fitness.
\item The full fitness \eqref{eq:ffull} is computed.
\item Counterexamples returned by the SMT solver are added to $C_{i,\mathrm{samp}}$.
\item The best individuals are copied to the next generation.
\item A new population is created by selecting individuals using tournament selection \cite{Koza1992} and modifying them using genetic operators. 
\item Steps 3 to 5 are repeated until all conditions are satisfied (i.e. $f_{\mathrm{SMT},\phi_i}=1$ for all $i$) or a maximum number of generations is reached. 
\end{enumerate}
\begin{rem}
If the maximum number of generations is reached before an individual achieves $f_{\mathrm{SMT},\phi_i}=1$ for all $i$, no guarantees on the specification are provided.
\end{rem}
\begin{rem}
It is possible to pre-define the controller modes in $\mathcal{G}$, such that only the CLBF is synthesized.  
\end{rem}

\subsection{Additional operations}
To aid finding the correct bias $s$ of $V(x)$ such that \eqref{eq:conI} is satisfied, the following biasing is performed before each fitness evaluation within CMA-ES:
\begin{equation}
V'(x) = V(x) - \max \Big( \max_{x \in I_\mathrm{samp}}( V(x)), 0 \Big),
\end{equation}
where $ I_\mathrm{samp}$ denotes a subsampled set of $I$. 
To guide the search further,  we impose the additional condition
\begin{equation}
\forall x \in S\backslash G: V(x) \geq V(x_c)
\label{eq:addconstraint}
\end{equation}
where $x_c$ denotes the center of the goal set. 

\section{Implementation}
The switching law in \eqref{eq:controller} is computationally intensive to check online. By offline designing  $\alpha_q: \R^n \rightarrow \R$ for all $q \in Q$ such that
\begin{equation}
\begin{array}{r}
 \forall x \in D, \forall q \in Q:   \max \limits_{z \in R_q(x) }\dot{V}_q(x,z) > -\gamma \implies\\
  \min \limits_{p\in Q}  (\dot{V}_p (x,x)+\alpha_p(x)) < \dot{V}_q(x,x) +\alpha_q(x),
\end{array}
\label{eq:alphacon}
\end{equation}
allows us to replace the switching law with:
\begin{equation}
q_k(t_k)  = \argmin \limits_{q\in Q} ( \dot{V}_q(\xi(t_k),\xi(t_k))+\alpha_q(\xi(t_k)) ).
\label{eq:SWrelax}
\end{equation}
Intuitively, when at a point $x$ a mode $q'$ is not viable under the reachable set $R_q(x)$, the nominal system $\dot{V}_{q'}(x,x)$ plus buffer $\alpha_{q'}(x) $ should not minimize the set $\bigcup_{q\in Q} \dot{V}_q(x,x)+\alpha_q(x)$, such that it is not selected by the switching law.  
\begin{theorem}
\label{thm:rel}
Given a CLBF, if $\forall q \in Q$, $\alpha_q(x)$ satisfies \eqref{eq:alphacon} for $D =A\backslash G$, switching law \eqref{eq:SWrelax} yields that \eqref{eq:RWS} holds.
\end{theorem}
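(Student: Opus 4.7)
The plan is to reduce the statement to Theorem \ref{thm:1} by showing that the relaxed switching law \eqref{eq:SWrelax} always returns a viable mode on $A \backslash G$, where a mode $q$ at state $x$ is called \emph{viable} if $\max_{z \in R_q(x)} \dot{V}_q(x,z) \leq -\gamma$. Once viability of the executed mode is established on $A \backslash G$, the one-step descent $V(\xi(t_{k+1})) \leq V(\xi(t_k)) - \gamma h$ and the argument that trajectories cannot escape through $\partial S$ before reaching $G$ are precisely the content of the proof of Theorem \ref{thm:1}, so they can be invoked without change.

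The key step is to exploit the contrapositive of \eqref{eq:alphacon}. Fix $x \in A \backslash G$ and let $q_k \in \argmin_{q \in Q}(\dot{V}_q(x,x) + \alpha_q(x))$ be a mode selected by \eqref{eq:SWrelax}. Then $\dot{V}_{q_k}(x,x) + \alpha_{q_k}(x) = \min_{p \in Q}(\dot{V}_p(x,x) + \alpha_p(x))$, so the strict inequality in the conclusion of \eqref{eq:alphacon} fails for $q = q_k$. By the contrapositive of \eqref{eq:alphacon} applied at $x$ and $q = q_k$ (which is legal since $x \in A \backslash G = D$), we conclude that $\max_{z \in R_{q_k}(x)} \dot{V}_{q_k}(x,z) \leq -\gamma$, i.e., $q_k$ is viable. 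Note that the argmin is nonempty because $Q$ is finite, and existence of at least one viable mode at every $x \in A \backslash G$ is guaranteed by \eqref{eq:conVd} in the CLBF definition, so the argument is not vacuous.

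With viability of $q_k$ in hand, I would next replay the proof of Theorem \ref{thm:1} verbatim: starting from $\xi(t_0) \in I \subseteq A$, an inductive argument over sampling instants $t_k$ uses the comparison lemma to obtain $V(\xi(t)) \leq V(\xi(t_k)) - \gamma(t - t_k)$ on $[t_k, t_k + h]$, which prevents $\xi(t)$ from crossing $\partial S$ (by \eqref{eq:condS}) and, combined with the lower bound on $V$ over $S \backslash G$ from Proposition \ref{prop:1}, forces $\xi$ to enter $G$ in finitely many steps. This establishes \eqref{eq:RWS}.

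The main obstacle I anticipate is simply stating the contrapositive cleanly: the implication in \eqref{eq:alphacon} is quantified over all $q$, and one must be careful to apply it to the \emph{chosen} mode $q_k$ rather than to an arbitrary one, and to handle the (possibly non-unique) argmin by observing that the strict inequality in \eqref{eq:alphacon} forbids any non-viable mode from ever tying for the minimum. Everything else is a direct reuse of the earlier CLBF argument.
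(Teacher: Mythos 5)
Your proposal is correct and matches the paper's own argument: the paper proves by contradiction that the relaxed law \eqref{eq:SWrelax} can only select a mode $q_k$ with $\max_{z \in R_{q_k}(x)}\dot{V}_{q_k}(x,z) \leq -\gamma$ (your contrapositive reading of \eqref{eq:alphacon} is the same logical step), and then defers to the proof of Theorem \ref{thm:1} exactly as you do.
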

\begin{proof}
This proof is by contradiction. By definition of the CLBF, for all $x \in A\backslash G$, there always exists a $q$ such that $\max_{z \in R_{q}(x) }\dot{V}_{q} (x,z) \leq -\gamma $. Assume that when using switching law \eqref{eq:SWrelax}, we have $ \max_{z \in R_{q_k}(\xi(t)) }\dot{V}_{q_k} (\xi(t),z) >  -\gamma $. It then follows from \eqref{eq:alphacon} that $ \min_{q\in Q}  (\dot{V}_q (x,x)+\alpha_q(x)) < \dot{V}_{q_k}(x,x) +\alpha_{q_k}(x)$. This directly contradicts the switching law $q_k =  \min_{q\in Q}  (\dot{V}_q (x,x)+\alpha_q(x))$. Hence switching law \eqref{eq:SWrelax} can only select a $q_k$ such that $\max_{z \in R_{q_k}(\xi(t_k)) }\dot{V}_{q_k} (\xi(t_k),z) \leq -\gamma $, which is guaranteed to exist by the design of the CLBF. The remainder of the proof is analogous to the proof of Theorem \eqref{thm:1}.
\end{proof}
\begin{corollary}
Given a CLBF satisfying \eqref{eq:addV}, if $\forall q$, $\alpha_q(x)$ satisfies \eqref{eq:alphacon} for $D =A \backslash int(B)$, using switching law \eqref{eq:SWrelax} yields that \eqref{eq:RSWS} holds.
\end{corollary}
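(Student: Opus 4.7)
The plan is to combine the relaxation argument of Theorem~\ref{thm:rel} with the two-phase reasoning of Corollary~\ref{col:rsws}. The switching law \eqref{eq:SWrelax} is the only thing that changes relative to Corollary~\ref{col:rsws}, so the task reduces to showing that on the region where we need a $-\gamma$ decrease, the relaxed law still selects a viable mode; from there the RSWS conclusion is immediate.

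First, I would verify the viability hypothesis on $D = A \backslash \operatorname{int}(B)$. For $x \in A \backslash G$, condition \eqref{eq:conVd} guarantees the existence of a $q \in Q$ with $\max_{z \in R_q(x)} \dot{V}_q(x,z) \leq -\gamma$; for $x \in G \backslash \operatorname{int}(B)$ the same conclusion follows from \eqref{eq:conVd2}. Since $A \backslash \operatorname{int}(B) \subseteq (A \backslash G) \cup (G \backslash \operatorname{int}(B))$, a viable mode exists at every point of $D$. Next, I would rerun the contradiction step from the proof of Theorem~\ref{thm:rel} verbatim on this $D$: if the relaxed switching law \eqref{eq:SWrelax} ever selected a mode $q_k$ with $\max_{z \in R_{q_k}(\xi(t_k))} \dot{V}_{q_k}(\xi(t_k),z) > -\gamma$, then \eqref{eq:alphacon} would give $\min_{q \in Q}(\dot V_q(x,x) + \alpha_q(x)) < \dot V_{q_k}(x,x) + \alpha_{q_k}(x)$, contradicting the very definition of $q_k$ as the argmin. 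Hence at every sample with $\xi(t_k) \in A \backslash \operatorname{int}(B)$ the relaxed law picks a mode along which $\dot V \leq -\gamma$ on $R_{q_k}(\xi(t_k))$.

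With this in hand, the remainder of the proof is a direct transcription of the arguments in Theorem~\ref{thm:1} and Corollary~\ref{col:rsws}. Applying the comparison principle between consecutive samples on $A \backslash G$ yields the reaching phase: trajectories from $I$ enter $G$ in finite time, staying inside $S$ by \eqref{eq:condS}. Then, applying the same comparison argument on $G \backslash \operatorname{int}(B)$ using \eqref{eq:conVd2}, trajectories enter $G \cap B$ in finite time. Forward invariance of $G \cap B$ is inherited from Corollary~\ref{col:rsws}: \eqref{eq:condG} prevents exit through $\partial G$, and the $-\gamma$ decrease on $G \cap \partial B$ prevents $V$ from crossing the level $\beta$ outward. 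Therefore \eqref{eq:RSWS} holds.

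The main obstacle I expect is purely bookkeeping rather than technical: making sure the sets $A$, $G$, and $B$ nest correctly so that the viability cover $(A \backslash G) \cup (G \backslash \operatorname{int}(B)) \supseteq A \backslash \operatorname{int}(B)$ is used only where it is actually valid, and that the switch from the reaching phase (which lives in $A \backslash G$) to the stabilizing phase inside $G$ does not require any additional hypothesis on $\alpha_q$ beyond the assumed domain $D = A \backslash \operatorname{int}(B)$. Once the inclusions are spelled out, the proof collapses into one invocation of the Theorem~\ref{thm:rel} contradiction lemma followed by a citation of Corollary~\ref{col:rsws}'s invariance argument.
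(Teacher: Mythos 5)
Your proposal is correct and follows essentially the same route as the paper, whose proof is literally the one-line remark that the argument is analogous to Theorem~\ref{thm:rel} and Corollary~\ref{col:rsws}: you verify viability of some mode on $D=A\backslash int(B)$ via the cover $(A\backslash G)\cup(G\backslash int(B))$, rerun the contradiction step of Theorem~\ref{thm:rel} to show the relaxed law \eqref{eq:SWrelax} selects such a mode there, and then invoke the two-phase reaching/invariance argument of Corollary~\ref{col:rsws}. Your write-up is in fact more explicit than the paper's, and the set-nesting caveat you flag at the end (that the reaching phase lives on $A\backslash G$ while the guarantee is stated on $A\backslash int(B)$) is the one point the paper also leaves implicit.
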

\begin{proof}
This proof is analogous to the proof of Theorem \ref{thm:rel} and Corollary \ref{col:rsws}.
\end{proof}
The functions $\alpha_q(x)$ can be designed and verified offline using again an SMT solver. 

\section{Case studies}
\begin{table*}[t]
\newcommand\Tstrut{\rule{0pt}{5.6ex}}         
\newcommand\Bstrut{\rule[-3.6ex]{0pt}{0pt}}   
\centering
\caption{Polynomial systems and results for 10 runs. $t$: total time, $\mu$: mean, $\sigma$: standard deviation.}
\begin{changemargin}{-2cm}{-2cm}
\label{tab:systems}
\scalebox{0.9}{
    \begin{tabular}{c|c|c|c}
  System & Linear & 2nd-order & 3rd-order \\
    \hline
    \Tstrut
$f(x,u)$ &  
$ \begin{array}{l}
\dot{x}_1 = x_2 \\ 
\dot{x}_2 = -x_1 +u
\end{array}
$ &
$
\begin{array}{ll}
\dot{x}_1 &= x_2-x_1^3\\ 
\dot{x}_2 &= u
\end{array} $
&
$\begin{array}{l}
\dot{x}_1=-10x_1 +10 x_2+u \\ 
\dot{x}_2 =28 x_1 - x_2-x_1 x_3 \\
\dot{x}_3 = x_1 x_2 - 2.6667 x_3
\end{array} $ \Bstrut \\
\hline 
\rule{0pt}{3.0ex}
$S$ &  $ [-1,~ 1]^2$ & $[-1,~ 1]^2$ & $[-5,~ 5]^2$ \\
$I$ & $ [-0.5, 0.5]^2$ & $[-0.5, 0.5]^2$ & $[-1.2, 1.2]^3$ \\
$G$ & $[-0.1, 0.1]^2$ & $[-0.05,~0.05]^2$ & $ [-0.3, 0.3]^3$ 

\rule[-1.6ex]{0pt}{0pt} 
\\
  \hline
\rule{0pt}{3.0ex}
$\mathcal{G}$ & $\{-1,0,1\}$&$\{-1,0,1\}$ & $\{-100,-50,-5,0,5,50,100\}$
\rule[-1.6ex]{0pt}{0pt} \\
\hline
\rule{0pt}{3.0ex}
$\varepsilon, h$
& 
$(2,1),~ 0.01 \mathrm{s}$
&
$(7, 0),~ 0.01 \mathrm{s}$
&
$(3800, 6800, 1900),~ 0.001 \mathrm{s}$

\rule[-1.6ex]{0pt}{0pt} \\
\hline
\hline

\rule{0pt}{2.0ex}
$\begin{array}{l}
~\\
\text{\# gen.} \\
t~ [\mathrm{s}]  \\
\end{array}
$
& 
$\begin{array}{cccc}
 \min & \max & \mu & \sigma\\
3	& 5	& 3.8	& 0.79 \\
11.55	& 21.34	& 15.67 &	3.21
\end{array}$
& 
$\begin{array}{cccc}
 \min & \max & \mu & \sigma\\
7 &	11&	9.1&	1.52 \\
32.26	&67.92&	47.81&	12.32
\end{array}$
&
$\begin{array}{cccc}
 \min & \max & \mu & \sigma\\
6	&50	&16.7 &16.03 \\
86.39	&524.96 &	205.08	&138.02
\end{array}$
\\
\hline
    \end{tabular}
    }
    \end{changemargin}
\end{table*}

\begin{table*}[t]
\newcommand\Tstrut{\rule{0pt}{6.6ex}}         
\newcommand\Bstrut{\rule[-5.6ex]{0pt}{0pt}}   
\centering
\caption{Nonpolynomial systems and results for 10 runs. $t$: total time, $\mu$: mean, $\sigma$: standard deviation}
\begin{changemargin}{-1.5cm}{-1.5cm}
\label{tab:systems2}
\scalebox{0.9}{
    \begin{tabular}{c|c|c}
  System & Pendulum  & Pendulum on a cart \\
    \hline
    \rule{0pt}{10.0ex}
$f(x,u)$&  
$\begin{array}{l}
 \dot{x}_1 = x_2 \\
 \dot{x}_2 =  - \left(\frac{b}{J}+\frac{K^2}{J R_a} \right) x_2  -\frac{m l g}{J}  \sin(x_1) +\frac{K}{J R_a}u\\
 ~ \\
  m =5.50\cdot 10^ {-2}  \textrm{ kg},~ l = 4.20 \cdot 10^ {-2} \textrm{ m},\\
 J = 1.91\cdot 10^{-4}\textrm{ kg} ~ \textrm{m}^2,~ g = 9.81 \textrm{ m} / \textrm{s}^2, \\K = 5.36\cdot 10^ {-2} \textrm{ Nm/A},~R_a = 9.50$ $\Omega, \\
 b= 3.0\cdot 10^{-6} \textrm{Nms}
\end{array} 
$
&
$\begin{array}{l}
\dot{x}_1 =x_2\\ 
\dot{x}_2 =\frac{g}{l}\sin(x_1) - \frac{b}{ml^2}x_2+\frac{1}{ml}\cos(x_1) u\\
 ~ \\
 g = 9.8 \textrm{ m}/ \textrm{s}^2 ,~ b = 2~\textrm{Nms} \\
 l = 0.5\textrm{ m}, ~ m =0.5\textrm{ kg}.
\end{array}$ 
   \rule{0pt}{10.0ex} \\
\hline 
\rule{0pt}{4.0ex}
$S$  & $\left[-2 \pi,~ 2\pi\right]\times \left[-100,~100\right]$ & $[-2\pi,2\pi]\times [-10,10]$ \\
$I$&  $\left[-\pi,~\pi\right]\times \left[-10,~10\right]$ & $[-0.5, 0.5]^2$ \\
$G$ & $\left[-1.0,~ -0.5\right]\times \left[-1.0,~1.0\right]$ & $[-0.25,0.25]\times [-1,1]$
\rule[-2.6ex]{0pt}{0pt} \\
  \hline
\rule{0pt}{3.0ex}
$\mathcal{G}$ &$\{-10,-5,0,5,10\}$ & $\{-6,-2,0,2,6\}$
\rule[-1.6ex]{0pt}{0pt} \\
\hline
\rule{0pt}{3.0ex}
$\varepsilon,~ h$
&
$(600, 12700),~ 0.001 \mathrm{s}$
&
$(200,3200),~  0.001 \mathrm{s}$
\rule[-1.6ex]{0pt}{0pt} \\
\hline
\hline
\rule{0pt}{2.0ex}
$\begin{array}{l}
~\\
\text{\# gen.} \\
t~ [\mathrm{s}]  \\
\end{array}
$
& 
$\begin{array}{cccc}
 \min & \max & \mu & \sigma\\
3&	12	&7.6& 	3.03 \\
32.97 &	185.96	&106.72& 	54.84

\end{array}$
&
$\begin{array}{cccc}
 \min & \max & \mu & \sigma\\
5	&16	&8.6&	3.47\\
48.02&	155.17&	85.2&	35.18
\end{array}$
\\
\hline
    \end{tabular}%
}\end{changemargin}
\end{table*}

In this section the effectiveness of the approach is demonstrated for a simple linear system, polynomial systems of second and third order (see Table \ref{tab:systems}), and two nonpolynomial systems (see Table \ref{tab:systems2}). The systems and specifications are adopted from \cite{ravanbakhsh2015ARCH} and references therein, with the exception of the Pendulum system, adopted from \cite{Verdier2017}. 

For these case studies, we fixed the control mode vector field $\mathcal{G}$ and synthesized controllers for the reach-while-stay specification $\mathtt{CS}_1$. Across all these case studies, we used a population of 16 individuals, a maximum of 50 generations, and a maximum of 30 generations within CMA-ES. The mutation and crossover rates were both chosen to be 0.5. The number of test samples  and maximum number of additional counterexamples were set to 100 and 300 respectively. For the counterexamples, a first-in-first-out principle was used. The (arbitrary) $\gamma$ of the CLBF was set to $\gamma =0.1$ and the precision parameter of dReal set to $\delta=0.001$. The values of $\varepsilon_i$ are obtained using bisection and the SMT solver. The GGGP algorithm and CMA-ES are implemented in Mathematica, running on an Intel Xeon CPU E5-1660 v3 3.00GHz using 8 CPU cores. 

The used grammar is defined by $\mathcal{S}_V = \rmm{const} + \rmm{expr}$, $\mathcal{N}$ and $\mathcal{P}$ as shown in Table \ref{tab:PRexamp}, and $\mathcal{P}^*$ is obtained by removing all recursive rules from $\mathcal{P}$. While this grammar restricts to polynomial CLBFs, the proposed approach can also be used for nonpolynomial CLBFs. Finally, the maximum recursive rule depth was set to be 7. 

\begin{table}[t]
\centering
\caption{Production rules $\mathcal{P}$ }
\label{tab:PRexamp}
\scalebox{1}{
    \begin{tabular}{rl}
    $\mathcal{N}$ & Rules \\ \hline
$\left< \mathrm{expr} \right> $ & $::=\rmm{expr} + \rmm{expr} ~|~ \rmm{pol} $\\
$\left< \mathrm{pol} \right> $ & $::=\rmm{pol} + \rmm{pol} ~|~ \rmm{const}\times\rmm{mon}  $\\
$\left< \mathrm{mon} \right>$ &$::= \rmm{var} ~|~ \rmm{var} \times \rmm{var} $ \\
 $\left< \mathrm{var} \right> $ & $::=x_1 -x_{c,1}~|~\dots ~|~  x_n-x_{c,n}$ \\
  $\left< \mathrm{const} \right> $ &$::=$ Random Real $\in \left[-10,10\right]$ \\
    $\left< \mathrm{G} \right>$ & $::=\{ \rmm{lin} \}~ |~ \dots ~|~  \{ \rmm{lin}, \rmm{lin},\rmm{lin} \}$,\\
  $\left< \mathrm{lin} \right> $ & $::=\rmm{const} (x_1-x_{c,1}) +\dots + \rmm{const} (x_n-x_{c,n}) $ \\ &$~~~~~|~\rmm{const} \rmm{var}~|~ \rmm{const} $\\
  \hline
    \end{tabular}%
    }
\end{table}

To show repeatability, the synthesis was repeated 10 times for each benchmark. Statistics on the number of generations and the total synthesis time are shown in Table \ref{tab:systems} and \ref{tab:systems2}. With the exception of the third-order polynomial system, in all 10 runs a solution was found for each benchmark. For the third-order system only a single run did not find a solution within 50 generations. 
Given the found solutions, we used again bisection and the SMT solver to find a $\beta$ such that the conditions in  Corollary \ref{col:rsws} hold. We found a $\beta$ such that the conditions in Corollary \ref{col:rsws} hold for: 4 solutions of the linear system, 1 of the 2nd-order system, 8 of the 3rd-order system, 0 of the pendulum system and 6 of the pendulum on cart system. Hence for these solutions the stronger specification $\mathtt{CS}_2$ is guaranteed.

One of the found solutions for the pendulum system is
\begin{equation*}
V(x) \!=\! -4015.83 + 10.8526 x_1' + 199.048 x_1'^2 + 
 0.311673 x_2 + 18.8116 x_1' x_2 + 2.23916 x_2^2,
\label{eq:Vpend}
\end{equation*}
where $x_1'=(0.75 + x_1)$.
We manually designed $\alpha(x) = [\alpha_1(x), \dots \alpha_5(x)]^T $ to be $[100,0,0,0,500]$ such that \eqref{eq:alphacon} holds.  Figure \ref{fig:pendSim} shows the phase plot of the closed-loop system for $\xi(t_0) \in \{(-\pi,10),(-2,-5),(1.5,0),(\pi,10)$\}. It can be seen that indeed all trajectories satisfy $\mathtt{CS}_1$.

For this solution, we could not find a $\beta$ such that Corollary \ref{col:rsws} holds. Nevertheless, by increasing the goal set to $G$ to $[-1,-0.5] \times [-1.5, 1.5]$, it can be verified that for $\beta = -4012.3$ the conditions in Corollary \ref{col:rsws} hold.  

\begin{figure}
\centering
\includegraphics[scale =0.9]{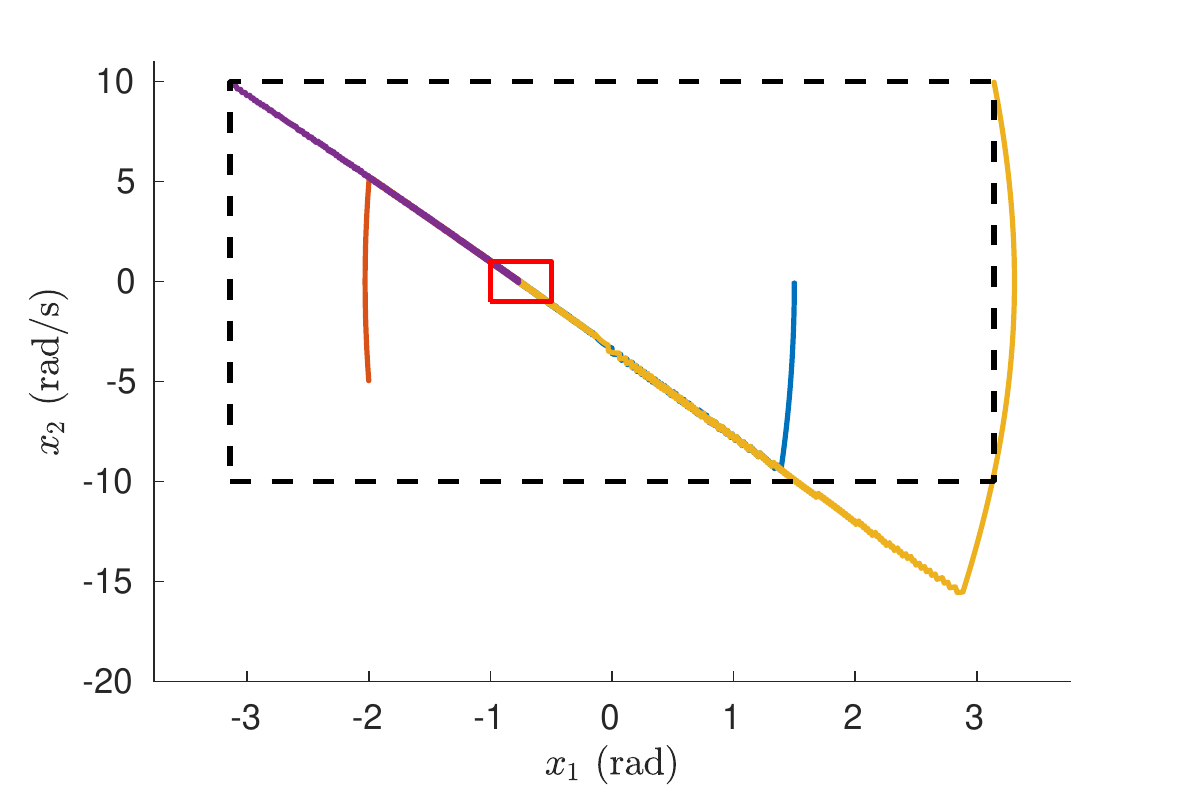}
\caption{Phase diagram of different initial conditions for the pendulum system using a found CLBF. Dashed: initial set, red: goal set.}
\label{fig:pendSim}
\end{figure}

The used sampling times $h$ are significant larger than the minimal dwell-times reported in \cite{Verdier2017} and \cite{ravanbakhsh2016}. For example, for the pendulum on cart system benchmark we used $h = 0.001$ seconds, whereas \cite{ravanbakhsh2016} reports a theoretical minimum dwell-time of $2 \cdot 10^{-6}$ seconds.

\subsection{Evolving $\mathcal{G}$}
Let us reconsider the pendulum on a cart from Table \ref{tab:systems2} and specification $\mathtt{CS}_1$, but without pre-specifying $\mathcal{G}$. We saturate the input with 
\begin{equation*}
u(t_k) = \max(-6, \min(6, g_{q_k}(\xi(t_k)))).
\end{equation*}
A separate gene for the controller modes $\mathcal{G}$ is used with start symbol $\mathcal{S}_G = \rmm{G}$ and the product rules in Table \ref{tab:PRexamp}. The results for 10 runs are shown in Table \ref{tab:CPexp1}. Comparing Table \ref{tab:systems2} with \ref{tab:CPexp1} we observe a comparable number of generations required to find a solution, although a longer computation time per generation is observed. However, the benefit is that no discretization of the input space is required. One of the found solutions is given by
\begin{align*}
V &=  -22.2281+52.1542 x_1^2+13.0965 x_1 x_2+17.3873 x_2^2,\\
\mathcal{G}&= \{-11.0824 x_1-13.2558 x_2\} .
\end{align*}
Note that $\mathcal{G}$ consists of only a single mode, hence no switching law is required when implementing this controller. 
Finally, for $\beta =-19.5313$, $V$ satisfies \eqref{eq:addV}, hence using this controller also guarantees $\mathtt{CS}_2$. 

\begin{table}[t]%
\caption{Results for 10 runs for the Pendulum on a cart system without pre-defining $\mathcal{G}$. }%
    \centering%
\label{tab:CPexp1}%
    \begin{tabular}{rcccc}%
    \hline%
   &   Min & Max & $\mu$ & $\sigma$ \\%
   \hline %
  \# gen. & 4	&14	&7.6	&2.84 \\%
   $t ~[\mathrm{s}]$  & 135.59&	536.12	&267.82	&124.48%
    \end{tabular}%
\end{table}%

\section{Discussion}

This paper presented a method for automatic synthesis of a periodically switched state feedback controller for nonlinear sampled-data systems with reachability and safety specifications. Preliminary results have been shown for several nonlinear systems up to the third-order. It was shown that the framework was able to synthesize CLBFs given pre-defined controller modes, but was also capable of synthesizing controller modes automatically, eliminating the need to discretize the input space beforehand. Moreover, it is possible to find a single controller automatically, removing the need for the switching law entirely. 

For almost all benchmarks runs, solutions were found within 50 generations. Nevertheless, a drawback of the proposed methodology is that there is no guarantee that a solution will be found within a number of generations.

A straightforward improvement to obtain less conservative sampling times is by using a less conservative over-approximation of the reachable set, e.g. by using higher order Taylor series approximations or using local bounds rather than for the entire domain.

Finally, the functions $\alpha_q(x)$ that simplify the switching condition are currently synthesized by hand. In future work, the aim is to automate this synthesis as well, for example by again using the combination of GP with SMT solvers.

\bibliographystyle{IEEEtran}

\bibliography{IEEEabrv,bib}

\begin{thebibliography}{10}
\providecommand{\url}[1]{#1}
\csname url@samestyle\endcsname
\providecommand{\newblock}{\relax}
\providecommand{\bibinfo}[2]{#2}
\providecommand{\BIBentrySTDinterwordspacing}{\spaceskip=0pt\relax}
\providecommand{\BIBentryALTinterwordstretchfactor}{4}
\providecommand{\BIBentryALTinterwordspacing}{\spaceskip=\fontdimen2\font plus
\BIBentryALTinterwordstretchfactor\fontdimen3\font minus
  \fontdimen4\font\relax}
\providecommand{\BIBforeignlanguage}[2]{{%
\expandafter\ifx\csname l@#1\endcsname\relax
\typeout{** WARNING: IEEEtran.bst: No hyphenation pattern has been}%
\typeout{** loaded for the language `#1'. Using the pattern for}%
\typeout{** the default language instead.}%
\else
\language=\csname l@#1\endcsname
\fi
#2}}
\providecommand{\BIBdecl}{\relax}
\BIBdecl

\bibitem{Tabuada2009}
P.~Tabuada, \emph{Verification and control of hybrid systems: a symbolic
  approach}.\hskip 1em plus 0.5em minus 0.4em\relax Springer US, 2009.

\bibitem{Mazo2010}
M.~Mazo, A.~Davitian, and P.~Tabuada, ``{PESSOA}: A tool for embedded
  controller synthesis,'' in \emph{Computer Aided Verification}.\hskip 1em plus
  0.5em minus 0.4em\relax Berlin, Heidelberg: Springer Berlin Heidelberg, 2010,
  pp. 566--569.

\bibitem{Rungger2016}
M.~Rungger and M.~Zamani, ``{SCOTS}: A tool for the synthesis of symbolic
  controllers,'' in \emph{Proc. of the 19th International Conference on Hybrid
  Systems: Computation and Control}.\hskip 1em plus 0.5em minus 0.4em\relax
  ACM, 2016, pp. 99--104.

\bibitem{Mouelhi2013}
S.~Mouelhi, A.~Girard, and G.~G{\"o}ssler, ``{CoSyMA}: a tool for controller
  synthesis using multi-scale abstractions,'' in \emph{Proc. of the 16th
  international conference on Hybrid systems: computation and control}.\hskip
  1em plus 0.5em minus 0.4em\relax ACM, 2013, pp. 83--88.

\bibitem{Arstein1983}
Z.~Artstein, ``Stabilization with relaxed controls,'' \emph{Nonlinear Analysis:
  Theory, Methods \& Applications}, vol.~7, no.~11, pp. 1163 -- 1173, 1983.

\bibitem{Wieland2007}
P.~Wieland and F.~Allg{\"o}wer, ``Constructive safety using control barrier
  functions,'' \emph{Proc. of the 7th IFAC Symposium on Nonlinear Control
  Systems}, pp. 462--467, 2007.

\bibitem{Romdlony2016}
M.~Z. Romdlony and B.~Jayawardhana, ``Stabilization with guaranteed safety
  using control lyapunov–barrier function,'' \emph{Automatica}, vol.~66, pp.
  39 -- 47, 2016.

\bibitem{Ames2016}
A.~D. Ames, X.~Xu, J.~W. Grizzle, and P.~Tabuada, ``Control barrier function
  based quadratic programs for safety critical systems,'' \emph{IEEE
  Transactions on Automatic Control}, vol.~PP, no.~99, pp. 1--1, 2016.

\bibitem{Papachristodoulou2002}
A.~Papachristodoulou and S.~Prajna, ``On the construction of lyapunov functions
  using the sum of squares decomposition,'' in \emph{Decision and Control,
  2002, Proc. of the 41st IEEE Conference on}, vol.~3, Dec 2002, pp. 3482--3487
  vol.3.

\bibitem{Tan2004}
W.~Tan and A.~Packard, ``Searching for control lyapunov functions using sums of
  squares programming,'' in \emph{Allerton Conferencei}, 2004, pp. 210--219.

\bibitem{Hancock2013}
E.~J. Hancock and A.~Papachristodoulou, ``Generalised absolute stability and
  sum of squares,'' \emph{Automatica}, vol.~49, no.~4, pp. 960 -- 967, 2013.

\bibitem{Ahmadi2011}
A.~A. Ahmadi, M.~Krstic, and P.~A. Parrilo, ``A globally asymptotically stable
  polynomial vector field with no polynomial lyapunov function.'' in
  \emph{CDC-ECE}, 2011, pp. 7579--7580.

\bibitem{Koza1992}
J.~R. Koza, \emph{Genetic Programming: On the Programming of Computers by Means
  of Natural Selection}.\hskip 1em plus 0.5em minus 0.4em\relax Cambridge, MA,
  USA: MIT Press, 1992.

\bibitem{Barrett2009}
C.~W. Barrett, R.~Sebastiani, S.~A. Seshia, and C.~Tinelli, ``Satisfiability
  modulo theories.'' \emph{Handbook of satisfiability}, vol. 185, pp. 825--885,
  2009.

\bibitem{Verdier2017}
C.~Verdier and J.~M.~Mazo, ``Formal controller synthesis via genetic
  programming,'' \emph{IFAC-PapersOnLine}, vol.~50, no.~1, pp. 7205 -- 7210,
  2017, 20th IFAC World Congress.

\bibitem{ravanbakhsh2016}
H.~Ravanbakhsh and S.~Sankaranarayanan, ``Robust controller synthesis of
  switched systems using counterexample guided framework,'' in \emph{Embedded
  Software (EMSOFT), 2016 International Conference on}.\hskip 1em plus 0.5em
  minus 0.4em\relax IEEE, 2016, pp. 1--10.

\bibitem{Khalil2002}
H.~Khalil, \emph{Nonlinear Systems}, ser. Pearson Education.\hskip 1em plus
  0.5em minus 0.4em\relax Prentice Hall, 2002.

\bibitem{Whigham1995}
P.~A. Whigham \emph{et~al.}, ``Grammatically-based genetic programming,'' in
  \emph{Proc. of the workshop on genetic programming: from theory to real-world
  applications}, 1995, pp. 33--41.

\bibitem{Backus1963}
J.~W. Backus, F.~L. Bauer, J.~Green, C.~Katz, J.~McCarthy, A.~J. Perlis,
  H.~Rutishauser, K.~Samelson, B.~Vauquois, J.~H. Wegstein, A.~van Wijngaarden,
  and M.~Woodger, ``Revised report on the algorithm language algol 60,''
  \emph{Commun. ACM}, vol.~6, no.~1, pp. 1--17, Jan. 1963.

\bibitem{Gao2013}
S.~Gao, S.~Kong, and E.~M. Clarke, ``{dReal}: An smt solver for nonlinear
  theories over the reals,'' in \emph{International Conference on Automated
  Deduction}.\hskip 1em plus 0.5em minus 0.4em\relax Springer, 2013, pp.
  208--214.

\bibitem{Hansen2001}
N.~Hansen and A.~Ostermeier, ``Completely derandomized self-adaptation in
  evolution strategies,'' \emph{Evolutionary computation}, vol.~9, no.~2, pp.
  159--195, 2001.

\bibitem{Ros2008}
R.~Ros and N.~Hansen, \emph{A Simple Modification in CMA-ES Achieving Linear
  Time and Space Complexity}.\hskip 1em plus 0.5em minus 0.4em\relax Berlin,
  Heidelberg: Springer Berlin Heidelberg, 2008, pp. 296--305.

\bibitem{ravanbakhsh2015ARCH}
H.~Ravanbakhsh and S.~Sankaranarayanan, ``Counterexample guided synthesis of
  switched controllers for reach-while-stay properties,'' \emph{arXiv preprint
  arXiv:1505.01180}, 2015.

\end{thebibliography}

\end{document}